\newtheorem{lemma}{Lemma}
\newtheorem{cor}{Corollary}
\newtheorem{theorem}{Theorem}
\newcommand{\cond}{\,\vert\,}
\newcommand{\defeq}{\triangleq}
\newfont{\bbb}{msbm10 scaled 500}
\newfont{\bb}{msbm10 scaled 1100}
\newcommand{\FF}{\mbox{\bb F}}
\newcommand{\Ic}{{\cal I}}
\newcommand{\Jc}{{\cal J}}
\newcommand{\Kc}{{\cal K}}
\newcommand{\Sc}{{\cal S}}
\newcommand{\Wc}{{\cal W}}
\newcommand{\eqdef}{\stackrel{\Delta}{=}}
\DeclareFontFamily{U}{cmfi}{}
\DeclareFontShape{U}{cmfi}{m}{n}{ <-> cmfi10 }{}
\DeclareSymbolFont{CMFI}{U}{cmfi}{m}{n}
\begin{document}

\title{Cache-Enabled Broadcast Packet Erasure Channels with State Feedback}

\author{\authorblockN{}
\authorblockA{Asma Ghorbel, Mari Kobayashi, and Sheng Yang \\
LSS, CentraleSup\'elec \\
Gif-sur-Yvette, France\\
 {\tt \{asma.ghorbel, mari.kobayashi, sheng.yang\}@centralesupelec.fr}
}
}

\maketitle

\begin{abstract}
 We consider a cache-enabled $K$-user broadcast erasure packet channel in
 which a server with a library of $N$ files wishes to deliver a requested
 file to each user who is equipped with a cache of a finite memory $M$. Assuming that the transmitter has state feedback and user caches can be filled during off-peak hours reliably by decentralized cache placement, we characterize the optimal rate region as a function of the memory size, the erasure probability.  
The proposed delivery scheme, based on the scheme proposed by Gatzianas et al., exploits the receiver side information established during the placement phase. Our results enable us to quantify the net benefits of decentralized coded caching in the presence of erasure. The role of state feedback is found useful especially when the erasure probability is large and/or the normalized memory size is small.
\end{abstract}

\section{Introduction}
The exponentially growing mobile data traffic is mainly due to video applications (e.g. content-based video 
streams). Such video traffic has interesting features characterized by its asynchronous and skew nature. Namely, the user demands are highly asynchronous (since they request when and where they wish) and a few very popular files are requested over and over. The skewness of the video
traffic together with the ever-growing cheap on-board storage memory suggests
that the quality of experience can be boosted by 
caching popular contents at (or close to) end users in wireless networks. A number of recent works have studied such concept under different models and assumptions (see \cite{golrezaei2011femtocaching,maddah2013fundamental,ji2013fundamental} and references therein). In most of these works, it is assumed that the caching is performed in two phases: {\it placement phase} to prefetch users' caches under their memory constraints 
(typically during off-peak hours) prior to the actual demands; {\it delivery phase} to transmit codewords 
such that each user, based on the received signal and the contents of its
cache, is able to decode the requested file. 
In this work, we focus on a coded caching model where a content-providing server is connected to many users, each equipped with a cache of finite memory \cite{maddah2013fundamental}. By carefully choosing the sub-files to be distributed
across users, coded caching exploits opportunistic multicasting such that a common signal is simultaneously
useful for all users even with distinct file requests. A number of extensions of \cite{maddah2013fundamental} have been developed including the case of decentralized placement phase \cite{maddah2013decentralized}, the case of non-uniform demands \cite{niesen2013coded,ji2015order, hachem2015effect}, the case of unequal file sizes \cite{zhangcoded}. Although the potential merit of coded caching has been highlighted in these works, many of them have ignored the inherent features of wireless channels. 

The main objective of this work is to quantify the benefit of coded caching
by relaxing the unrealistic assumption of a perfect shared link. To this end, we model the bottleneck link as a broadcast packet erasure channel (BPEC) to capture random failure or disconnection of any server-user link that a packet transmission may experience especially during high-traffic hours (deliverly phase). The placement phase is performed either in a decentralized  \cite{maddah2013decentralized} or centralized manner \cite{maddah2013fundamental} over the erasure-free shared link.  
We further assume that the broadcast packet erasure channel is memoryless and
independent and identically distributed~(i.i.d.) across users and that the server acquires the channel states causally via feedback sent by users. Under this setting, we study the achievable rate region of the cache-enabled BPEC as a function of the main system parameters. Our contributions are two-hold: 1) a comprehensive analysis of the algorithm proposed by Gatzianas et al. \cite{gatzianas2013multiuser}, hereafter called GGT algorithm, which enables to characterize the achievable rate of high-order packet transmission; 2) characterization of the achievable rate region of the BPEC with feedback under decentralized cache placement for the case of distinct file requests.  
We prove that a simple delivery scheme extending GGT algorithm to the case of receiver side information can achieve the rate region.

Finally, we remark that a few recent works
\cite{huang2015performance,timo2015joint} have considered coded caching by
relaxing the perfect shared link assumption during delivery phase along the
line of this work. On the one hand, the work \cite{huang2015performance}
studies the resource allocation problem by modeling the bottleneck link as
multi-carrier fading channels. On the other hand, the authors in
\cite{timo2015joint} characterize the information theoretic tradeoff between
the reliable communication rate and the cache sizes in the erasure broadcast
channel with asymmetric erasure probability across users. In both works, it
is found that the performance of coded caching is somehow limited by the
worst case user. Contrary to this rather pessimistic conclusion, we find that
state feedback is useful to improve the performance of coded caching
especially in the regime of a small memory size~(with respect to the number
of files) and with a large erasure probability. This is because the packets not received by the intended users but overheard by unintended users can create multicast opportunity for later transmission at the price of a delay.  

The structure of the paper is as follows. Section \ref{section:MainResult}
introduces first the system model and definitions, and then highlights the main
results of this work. Sections \ref{section:converse} and
\ref{section:achievability} prove the converse and the achievability of the rate region of the cache-enabled BPEC with feedback, respectively. Section \ref{section:Examples} provides some numerical results to show the performance of our proposed delivery scheme and finally section \ref{section:conclusions} concludes the paper.  Throughout the paper, we use the following notational conventions.  
The superscript notation $X^n$ represents a sequence $(X_1,\ldots,X_n)$  of variables.
$X_{\Ic}$ is used to denote the set of variables $\{X_i\}_{i\in\Ic}$. 
Logarithm is to the base
$2$. The entropy of $X$ is denoted by $H(X)$. We define $\Ic_k \defeq \{1,\dots, k\}$ for $k=1,\dots, K$ and let $[K]=\{1,\dots, K\}$.

\section{System Model and Main Results} \label{section:MainResult}
\subsection{System model and definitions}
\begin{figure}
\vspace{-10pt}
\begin{center}
\includegraphics[width=0.45\textwidth,clip=]{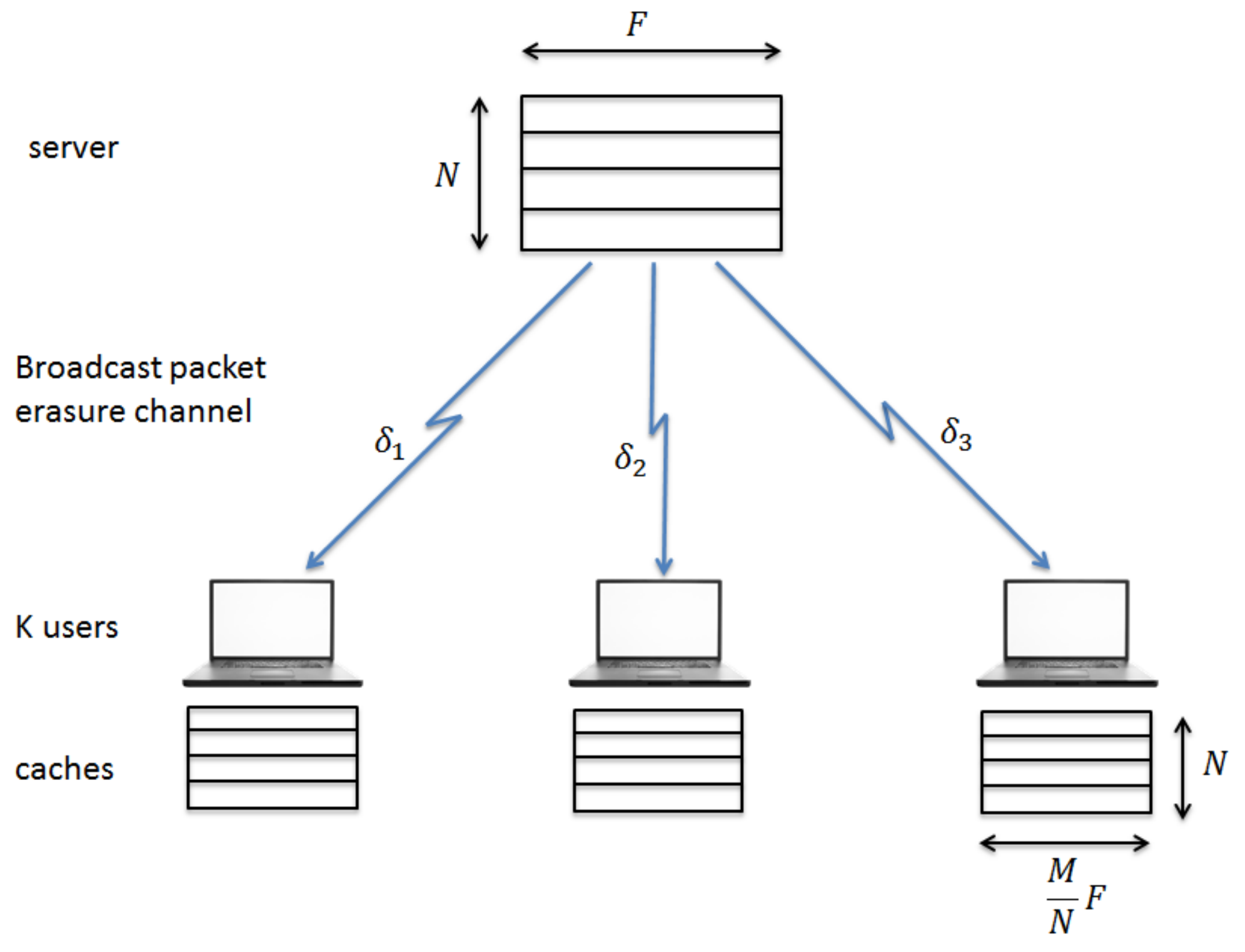}
\vspace{-2pt}
\caption{A cached enabled broadcast packet erasure channel for the case of $K=3$ and $F_i=F$ for all $i$}
\label{fig:model}
\end{center}
\vspace{-15pt}
\end{figure}

We consider a cache-enabled network depicted in Fig. \ref{fig:model} where a server is connected to $K$ users through a broadcast packet erasure channel (BPEC). 
The server has an access to $N$ files $W_1, \dots, W_N$ where file $i$, i.e.
$W_i$, consists of $F_i$ packets of $L$ bits each ($F_i L$ bits). Each user
$k$ has a cache memory $Z_k$ of $MF$ packets for $M\in [0,N]$, where
$F\defeq \frac{1}{N} \sum_{i=1}^N F_i$ is the average size of the files. 
Under such a setting, consider a discrete time communication system where a packet is sent in each slot over the $K$-user BPEC. 
The channel input $X_i\in  \FF_q$ belongs to the input alphabet of size $L=\log q$ bits\footnote{Throughout the paper, we assume that $L> \log_2 K$ so that the achievability results of \cite{gatzianas2013multiuser} hold.}. The channel is assumed to be memoryless and i.i.d. across users so that in a given slot we have 
\begin{align}\label{eq:EBC}
& \Pr( Y_{1}, Y_{2}, \dots, Y_{K}|X ) = \prod_{k=1}^K \Pr(Y_k | X) \\
& \Pr ( Y_k |X) =\begin{cases}
1-\delta,& Y_k=X,\\
\delta,& Y_k = E
\end{cases}
\end{align}
where $Y_k$ denote the channel output of receiver $k$ and $E$ denotes erasure. We let $S_i\in \Sc=2^{\{1,\dots, K\}}$ denote the
state of the channel in slot $i$ which indicates the users who received
correctly the packet. We assume that the transmitter obtains the state
feedback $S^{i-1}$ at the end of slot $i$ while all the receivers know $S^n$ at the end of the transmission. 

The caching is performed in two phases: placement phase and delivery phase.
In placement phase, the server fills the caches of all users $Z_1, \dots,
Z_K$ up to the memory constraint. As in most works in the literature, we
assume that the placement phase is done without error and neglect the cost,
since it takes place usually during off-peak traffic hours. Once each user
$k$ makes a request $d_k$, the server sends codewords so that each user can
decode its requested file as a function of its cache content and received
signals during delivery phase. We provide a more formal definition below.  A
$(M, F_{d_1}, \dots, F_{d_K}, n)$ caching scheme consists of the following
components.
\begin{itemize}
\item $N$ message files $W_1,\dots, W_N$ are independently and uniformly distributed over $\Wc_1 \times \dots \times \Wc_N$ with $\Wc_i = \FF_q^{F_i}$ for all $i$.
\item $K$ caching functions are given by $\phi_k: \FF_q^{\sum_{i=1}^N F_i} \to\FF_q^{FM}$ map the files $W_1, \dots, W_N$ into the cache contents 
\begin{align}
Z_k \defeq \phi_k(W_1,\dots, W_N)
\end{align}
for each user $k$. 
\item A sequence of encoding functions which transmit at slot $i$ a symbol
  $X_i = f_i (W_{d_1}, \dots, W_{d_K}, S^{i-1})\in \FF_q$, based on the
  requested file set and the channel feedback up to slot $i-1$ for $i=1,
  \dots, n$, where $W_{d_k}$, $d_k \in \{\emptyset, 1, \dots, N\}$, denotes the message file requested by user $k$. 
\item A decoding function of user $k$ is given by the mapping $g_k: \FF_q^n
  \times\FF_q^{FM} \times \Sc^n \to \FF_q^{F_{d_k}}$ so that
the decoded file is $\hat{W}_{d_k}= g_k(Y_k^n, Z_k, S^n)$ 
as a function of the received signals $Y_k^n$, the cache content $Z_k$, as well as the state information $S^n$. 
\end{itemize}
A rate tuple $(R_1,\dots, R_K)$ is said to be achievable if, for every
$\epsilon>0$, there exists a $(M, F_{d_1},\dots, F_{d_K}, n)$ caching strategy that satisfies 
\begin{itemize}
\item reliability condition 
\[  \max_{(d_1,\dots, d_K)\in \{1,\dots, N\}^K} \max_k  \Pr( g_k(Y_k^n, Z_k, S^n) \neq W_{d_k} ) < \epsilon \]
\item rate condition 
\begin{align}
 R_k  <  \frac{F_{d_k}}{n}. 
\end{align} 
\end{itemize}
Throughout the paper, we express the entropy and the rate in terms of packets in oder to avoid the constant factor $L=\log_2 q$. 
In this work, we focus on the case of equal file size $F_i=F$ for simplicity. 

\subsection{Decentralized cache placement}
We mainly focus on decentralized cache placement proposed in \cite{maddah2013decentralized} and adapt it to the packet-based broadcast channel (with no error). Under the memory constraint of $MF$ packets, each user $k$ independently caches a subset of $\frac{MF}{N}$ packets of file $i$, chosen uniformly at random for $i=1,\dots, N$. By letting $W_{i|\Kc}$ denote the sub-file of $W_i$ stored in the cache memories (known) of the users in $\Kc$, the cache memory $Z_k$ of user $k$ after decentralized placement is given by
\begin{align} \label{eq:Zk}
Z_k =\{ W_{i \cond \Kc} \;\; \forall k \subseteq \Kc \subseteq [K], \;\;  \forall i =1,\dots, N. \} 
\end{align}
To illustrate the placement strategy, consider an example of $K=3$ users and a file $A$ of $F$ packets. 
After the placement phase, a given file A will be partitioned into 8 subfiles:
\begin{align}\label{eq:DecentralizedPlacement}
A=\{A_0, A_1, A_2, A_3, A_{12}, A_{13}, A_{23}, A_{123} \} 
\end{align}
where, for $\Kc \subset \{1,2,3\}$,  $A_{\Kc}$ denotes the packets of file $A$ stored exclusively in the cache memories of users in $\Kc$. By a law of large numbers as $F\rightarrow \infty$, the size of $|A_{\Kc}|$ measured in packets is given by 
\begin{align}
  \frac{|A_{\Kc}|}{F}=p^{|\Kc|}(1-p)^{3-|\Kc|}. \label{eq:LLN}
\end{align}%

\subsection{Main results}
In order to characterize the rate region of a cached-enabled BPEC with state feedback, we focus on the case of most interest with $N\geq K$ and assume further that users' demands are all distinct. 
\begin{theorem} \label{theorem:region}
The optimal rate region of the cached-enabled BPEC with state feedback under decentralized cache placement is given by
\begin{align}\label{eq:wsr}
\sum_{k=1}^K \frac{\left(1-\frac{M}{N}\right)^k}{1-\delta^k} R_{\pi_k} \leq 1
\end{align}
for any permutation $\pi$ of $\{1,\dots, K\}$. 
\end{theorem}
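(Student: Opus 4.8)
The plan is to prove the two inclusions of the region separately and to reduce both to a single symmetric sub‑problem. Write $p\defeq M/N$ and $\alpha_j\defeq (1-p)^j/(1-\delta^j)$, so the claimed set is $P\defeq\{R=(R_1,\dots,R_K)\ge 0:\ \sum_{j=1}^K\alpha_jR_{\pi_j}\le 1\ \text{for every permutation }\pi\}$. Since $(1-p)^j$ decreases and $1-\delta^j$ increases in $j$, the coefficients satisfy $\alpha_1>\alpha_2>\dots>\alpha_K>0$, and an elementary computation shows that the extreme points of $P$ are exactly the tuples $R^{(T)}$ indexed by $T\subseteq[K]$, with $R^{(T)}_k=(\sum_{j=1}^{|T|}\alpha_j)^{-1}$ for $k\in T$ and $R^{(T)}_k=0$ otherwise. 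Because achievable tuples combine by time‑sharing over request rounds, it then suffices to show, for each $T$ with $|T|=m$: (i) any scheme in which the users in $T$ request distinct files of $F$ packets and the remaining users stay idle needs at least $F\sum_{j=1}^m\alpha_j$ slots; and (ii) such a scheme exists using exactly $F\sum_{j=1}^m\alpha_j$ slots. Statement (i) forces every achievable tuple into $P$; statement (ii) realizes the vertex $R^{(T)}$, and convexity fills the interior of $P$.

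For the converse (i), relabel so that $T=\{1,\dots,m\}$ and set up a genie‑aided chain: for $1\le j\le m$, reveal to receiver $j$ the additional data $(Y_1^n,\dots,Y_{j-1}^n,Z_1,\dots,Z_{j-1},S^n)$, which can only enlarge the region; the so‑enhanced receiver $j$ can run $g_1,\dots,g_j$ and hence decode $W_{d_1},\dots,W_{d_j}$. Two facts then close the argument. First, because the demands are distinct and $N\ge K$, the requested files are independent and, by the structure \eqref{eq:Zk}--\eqref{eq:LLN} of decentralized placement, conditioning on $Z_1,\dots,Z_j$ leaves only the sub‑files $\{W_{d_j\cond\Kc}:\Kc\cap[j]=\emptyset\}$ of $W_{d_j}$ undetermined, of total size $\sum_{\Kc\subseteq\{j+1,\dots,K\}}p^{|\Kc|}(1-p)^{K-|\Kc|}F=(1-p)^jF$ (letting $F\to\infty$ so the sizes concentrate). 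Second, the channel outputs of receivers $1,\dots,j$ jointly determine the transmitted symbol $X_i$ whenever at least one of them is unerased, an event of probability $1-\delta^j$; combined with $\sum_iH(X_i\mid\cdot)\le n$ this is the one place the memoryless erasure model is used, and it stays valid even though $X_i$ may depend on the feedback $S^{i-1}$. Feeding both facts into Fano's inequality and telescoping over $j$, paralleling the feedback converse of \cite{gatzianas2013multiuser}, gives $\sum_{j=1}^m(1-p)^jF/(1-\delta^j)\le n(1+\epsilon_n)$, i.e.\ $n\ge F\sum_{j=1}^m\alpha_j$ after $n\to\infty$.

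For achievability (ii), use decentralized placement \eqref{eq:Zk}, so each requested file splits as in \eqref{eq:DecentralizedPlacement} and $W_{d_k\cond\Kc}$ is held exactly by the users in $\Kc$, and then a side‑information‑aware version of the GGT algorithm. First form, for every non‑empty $\Sc\subseteq T$ and every $\Uc\subseteq[K]\setminus T$, the coded packet $\bigoplus_{k\in\Sc}W_{d_k\cond(\Sc\setminus\{k\})\cup\Uc}$; all summands share the length $p^{|\Sc|-1+|\Uc|}(1-p)^{K-|\Sc|+1-|\Uc|}F$ and each $k\in\Sc$ can cancel all but its own summand from its cache, so this is a genuine order‑$|\Sc|$ multicast packet for $\Sc$; summing over $\Uc$, the amount of order‑$s$ traffic created equals $\binom{m}{s}p^{s-1}(1-p)^{m-s+1}F$, the profile of the $m$‑user decentralized scheme. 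Deliver these in $m$ rounds $s=1,\dots,m$: round $s$ sends the freshly created order‑$s$ packets together with the order‑$s$ residues left by round $s-1$ (a round‑$(s-1)$ packet received by a strict subset of its group, XOR‑combined with other residues, becomes a packet for a group of size $s$), and the erasures of round $s$ produce the order‑$(s+1)$ residues for round $s+1$. Carrying out the completion‑time bookkeeping of this cascade --- the comprehensive analysis of GGT announced in the introduction --- shows that the slots used in round $s$ together with the amount of order‑$(s+1)$ traffic it spawns obey a recursion whose solution is exactly $n=F\sum_{j=1}^m(1-p)^j/(1-\delta^j)$, matching (i); e.g.\ for $m=2$ the two rounds cost $2(1-p)^2F/(1-\delta^2)$ and $p(1-p)F/(1-\delta)+\delta(1-p)^2F/(1-\delta^2)$ slots, which sum to $F\big((1-p)/(1-\delta)+(1-p)^2/(1-\delta^2)\big)$.

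The genie construction and the polytope/time‑sharing wrapper are routine; the principal obstacle is the achievability bookkeeping --- tracking, across all $m$ phases of the side‑information‑aided GGT algorithm, how much order‑$s$ traffic is generated by erasures and handed to phase $s+1$, and showing that the resulting recursion collapses to the closed form $\sum_j(1-p)^j/(1-\delta^j)$. A minor secondary point, on the converse side, is the concentration argument justifying the replacement of the random sub‑file sizes by their $F\to\infty$ limits $p^{|\Kc|}(1-p)^{K-|\Kc|}F$, uniformly over the finitely many sets $\Kc$.
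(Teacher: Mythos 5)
Your outline follows the paper's own route in all essentials: the same reduction of the polytope to its symmetric vertices $R^{(T)}$ with time-sharing (the paper cites \cite{maddah2010degrees} for this step), the same genie-aided converse in which receiver $j$ is handed $Y^n_{[j-1]}, Z_{[j-1]}, S^n$ and the two driving facts are exactly the paper's Lemma~\ref{lemma:decentralized} (conditioning on $Z_1,\dots,Z_j$ leaves $(1-p)^jF$ packets of $W_{d_j}$ unresolved) and Lemma~\ref{lemma:erasure-ineq} (the $1/(1-\delta^j)$ entropy inequality for the erasure channel with feedback), and the same achievability idea of injecting the placement-generated sub-files as order-$s$ packets into the GGT cascade. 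Your order-$s$ traffic count $\binom{m}{s}p^{s-1}(1-p)^{m-s+1}F$ and your $m=2$ completion-time check are both correct.

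The one substantive gap is the step you yourself flag: for general $K$ you assert, but do not prove, that the cascade's completion time collapses to $F\sum_{j=1}^{K}(1-p)^j/(1-\delta^j)$. This is not routine bookkeeping and is where the paper does its real work. The paper closes it in two moves: first, Lemma~\ref{lemma:order} establishes the closed form $R^i(K)=\binom{K}{i}\bigl/\sum_{k=1}^{K-i+1}\binom{K-k}{i-1}/(1-\delta^k)$ for the order-$i$ sum rate, via the recursion $U_j^i=\frac{1}{\beta_j}\sum_{l}\binom{j-1}{l}(-1)^{l+1}\beta_{j-l}U^i_{j-l}$ and a binomial identity (Appendix~B, following \cite{gatzianas2013multiuser}); second, Corollary~\ref{cor:Sheng} proves the decomposition identity $t_j(N_1)=\sum_{i=2}^{j}t_j^i(N_{1\to i})$ by induction on $j$, which is precisely what lets one express the total duration as $\frac{KN_0}{\beta_1}+\sum_{j\ge2}\binom{K}{j}(N_{0\to j}+N_{1\to j})/R^j(K)$ and hence fold the placement-phase packets in additively. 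Without an argument of this type (or an equivalent direct solution of your recursion), the achievability half of the theorem is not established; a single low-order sanity check does not substitute for it. Your secondary point about concentration of the sub-file sizes as $F\to\infty$ is indeed minor, and your converse, though phrased for symmetric demands, yields the general weighted inequality verbatim by keeping $H(W_{d_j})=nR_j$ unequal, so neither of those is an obstacle.
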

The proof of Theorem \ref{theorem:region} is provided in upcoming sections. 
This region yields the following symmetrical rate $R_k=R_{\rm sym}(K) $ for
all $k$ with
\begin{align}\label{eq:symmetricR}
 R_{\rm sym} (K)= \frac{1}{\sum_{k=1}^K
 \frac{\left(1-\frac{M}{N}\right)^k}{1-\delta^k}}.
\end{align} 
The following corollary holds. 
\begin{cor}\label{cor:rate}
The minimum number of transmissions to deliver a distinct file to each user in the cached-enabled BPEC under decentralized cache placement is given by
\begin{align}
T_{\rm tot}=   \Theta(F) \sum_{k=1}^K \frac{\left(1-\frac{M}{N}\right)^k}{1-\delta^k} 
\end{align}
as $F\rightarrow \infty$. 
\end{cor}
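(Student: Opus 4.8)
The plan is to obtain the corollary directly from the rate region of Theorem~\ref{theorem:region} by specializing it to the situation in which every user must recover a file of exactly $F$ packets. First I would note that in a $(M,F,\dots,F,n)$ caching scheme delivering a distinct file to each of the $K$ users, the rate condition forces $R_k < F/n$ for every $k$, so the relevant operating point is the symmetric one $(R,\dots,R)$; consequently the minimum admissible blocklength is $n=F/R$ where $R$ is the largest symmetric rate in the region. Thus $T_{\rm tot}$ is governed by $R_{\rm sym}(K)$ and no optimization over the region beyond this is needed, since the common blocklength makes the rate tuple symmetric by construction.

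Next I would substitute the symmetric tuple into \eqref{eq:wsr}. At a symmetric point $\sum_k \frac{(1-M/N)^k}{1-\delta^k}R_{\pi_k}=R\sum_k \frac{(1-M/N)^k}{1-\delta^k}$ for every permutation $\pi$, so the whole family of inequalities collapses to the single constraint $R\sum_{k=1}^K \frac{(1-M/N)^k}{1-\delta^k}\le 1$, whose extreme point is $R=R_{\rm sym}(K)$ as in \eqref{eq:symmetricR}. Plugging $n=F/R_{\rm sym}(K)$ then gives $n = F\sum_{k=1}^K \frac{(1-M/N)^k}{1-\delta^k}$, which is the leading term in the claim; the converse direction uses the outer bound of Section~\ref{section:converse} and the matching achievability uses the delivery scheme of Section~\ref{section:achievability}.

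Finally I would explain why the statement is phrased with $\Theta(F)$ rather than as an exact equality. Two finite-size effects intervene: (i) under decentralized placement the sub-file sizes $|A_{\Kc}|$ concentrate on $Fp^{|\Kc|}(1-p)^{K-|\Kc|}$ only in the limit, as in \eqref{eq:LLN}, with fluctuations of smaller order in $F$; and (ii) packet counts and slot counts are integers and the reliability requirement carries an $\epsilon$-slack, plus the GGT-based delivery has some coding/feedback overhead. I would argue each of these contributes only $o(F)$ transmissions, so that $T_{\rm tot}\big/\big(F\sum_{k=1}^K \frac{(1-M/N)^k}{1-\delta^k}\big)\to 1$ as $F\to\infty$.

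The main obstacle is therefore bookkeeping rather than conceptual: one must verify that the two approximations above are genuinely sublinear in $F$—in particular that the overhead of the extended GGT algorithm analyzed in Section~\ref{section:achievability} vanishes on the per-packet scale—so that the leading constant $\sum_{k=1}^K \frac{(1-M/N)^k}{1-\delta^k}$ is unaffected. Given Theorem~\ref{theorem:region}, no new ideas beyond this accounting are required.
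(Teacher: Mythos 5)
Your proposal is correct and follows essentially the same route the paper takes: the corollary is read off from Theorem~\ref{theorem:region} by specializing to the symmetric point $R_{\rm sym}(K)$ of \eqref{eq:symmetricR} and setting $T_{\rm tot}=n\approx F/R_{\rm sym}(K)$, with the $\Theta(F)$ absorbing the law-of-large-numbers fluctuations of the decentralized placement and the sublinear overheads of the GGT-based delivery. Your additional remark that permutation-invariance and convexity make the symmetric point max-min optimal is a useful explicit justification of a step the paper leaves implicit, but it is not a different method.
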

The following remarks are in order. The results cover some special cases of interest. For the case without cache memory $M=0$, 
the region in Theorem \ref{theorem:region} simply boils down to the BPEC with state feedback \cite{gatzianas2013multiuser}. For the case
of no erasure, the number of transmission in Corollary \ref{cor:rate} scaled
by $F$ is precisely the rate-memory tradeoff under decentralized cache
placement for $N\geq K$ \cite{maddah2013decentralized}. In fact, after some
simple algebra, the number of transmissions normalized by the file size can be rewritten as 
\begin{align}
\frac{T_{\rm tot}}{F}=  \frac{N}{M} \left(1-\frac{M}{N}\right) \left\{1-\left(1-\frac{M}{N}\right) ^K\right\}.
\end{align}
This coincides with the ``rate'' measured as the number of files to be sent
over the shared perfect link as defined by Maddah-Ali and Niesen
\cite{maddah2013decentralized}.  In the following sections we provide the proof
of the converse and achievability.

\section{Optimality of delivery phase} \label{section:converse}
In this section, we provide the converse part of Theorem \ref{theorem:region} under the assumption that decentralized cache placement is performed. 
We let $p=\frac{M}{N}$ denote the probability of storing a file in a given user's cache memory. First we provide two useful lemmas. 
\begin{lemma}\cite[Lemma 5]{ShengISIT2015}\label{lemma:erasure-ineq}
  For the broadcast erasure channel with independent erasure events~(with probability $\{\delta_k\}$) for
  different users, if $U$ is such that $X_i\leftrightarrow U  Y_{\Ic}^{i-1}  S^{i-1} \leftrightarrow
  (S_{i+1},\ldots,S_n)$, $\forall\,\Ic$,
\begin{multline}
  \frac{1}{1-\prod_{i\in \Ic} \delta_i} H(Y^n_{\mathcal{I}} \cond U, S^n) \le \frac{1}{1-\prod_{i\in \Jc} \delta_i }
  H(Y^n_{\mathcal{J}} \cond U, S^n), 
  \label{eq:essential-erasure}
\end{multline}%
for any sets $\Ic, \Jc$ such that $\Jc \subseteq \Ic \subseteq \left\{ 1,\ldots,K \right\}$.  
\end{lemma}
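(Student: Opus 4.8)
The plan is to show that, once the erasure factors are divided out, both sides of \eqref{eq:essential-erasure} reduce to the \emph{same} single-letter sum and then to conclude by a trivial monotonicity argument. Concretely, I would first prove the exact identity
\begin{equation}\label{eq:planidentity}
\frac{1}{1-\prod_{i\in\mathcal I}\delta_i}\,H(Y^n_{\mathcal I}\cond U,S^n)=\sum_{t=1}^n H\big(X_t\cond U,Y^{t-1}_{\mathcal I},S^{t-1}\big),
\end{equation}
valid for every set $\mathcal I$, and then apply it twice, once for $\mathcal I$ and once for $\mathcal J$.

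To establish \eqref{eq:planidentity}, expand $H(Y^n_{\mathcal I}\cond U,S^n)=\sum_{t}H(Y_{\mathcal I,t}\cond U,Y^{t-1}_{\mathcal I},S^n)$ by the chain rule and treat one slot at a time. Condition on a realization $S^n=s^n$. If $s_t\cap\mathcal I=\emptyset$ then every user in $\mathcal I$ is erased, $Y_{\mathcal I,t}$ is the constant all-erasure vector and contributes zero entropy; if $s_t\cap\mathcal I\neq\emptyset$ then, given $s_t$, the map $X_t\mapsto Y_{\mathcal I,t}$ that places $X_t$ on the received coordinates and $E$ elsewhere is injective, so $H(Y_{\mathcal I,t}\cond U,Y^{t-1}_{\mathcal I},S^n=s^n)=H(X_t\cond U,Y^{t-1}_{\mathcal I},S^n=s^n)$. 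Since the fresh state $S_t$ is independent of $(X_t,U,Y^{t-1}_{\mathcal I},S^{t-1})$ with $\Pr[S_t\cap\mathcal I\neq\emptyset]=1-\prod_{i\in\mathcal I}\delta_i$ (this is exactly where the independence of erasures across users enters), and since the hypothesised Markov chain $X_t\leftrightarrow(U,Y^{t-1}_{\mathcal I},S^{t-1})\leftrightarrow(S_{t+1},\dots,S_n)$ makes the future states irrelevant to $X_t$ given the past, averaging over $s^n$ gives $H(Y_{\mathcal I,t}\cond U,Y^{t-1}_{\mathcal I},S^n)=(1-\prod_{i\in\mathcal I}\delta_i)\,H(X_t\cond U,Y^{t-1}_{\mathcal I},S^{t-1})$. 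Summing over $t$ yields \eqref{eq:planidentity}.

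With \eqref{eq:planidentity} the lemma follows immediately: because $\mathcal J\subseteq\mathcal I$, the tuple $Y^{t-1}_{\mathcal J}$ is a sub-tuple of $Y^{t-1}_{\mathcal I}$, so conditioning on more can only decrease entropy, $H(X_t\cond U,Y^{t-1}_{\mathcal I},S^{t-1})\le H(X_t\cond U,Y^{t-1}_{\mathcal J},S^{t-1})$ for every $t$; summing over $t$ and using \eqref{eq:planidentity} for $\mathcal I$ on the left and for $\mathcal J$ on the right gives \eqref{eq:essential-erasure}. No induction on $|\mathcal I\setminus\mathcal J|$ is needed.

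The step I expect to be the main obstacle is the bookkeeping in the per-slot analysis: justifying cleanly that conditioning on the \emph{entire} state sequence $S^n$, including the future states $S_{t+1},\dots,S_n$, may be replaced by conditioning on $S^{t-1}$ when evaluating $H(X_t\cond\cdot)$ — this is precisely where the Markov hypothesis (together with the causal, memoryless structure of the erasure channel) must be invoked — and handling the two cases of the input-to-output map correctly, namely that it is a bijection exactly when at least one user in the set is unerased and a constant otherwise. One should also note that the whole argument is written for possibly distinct $\{\delta_k\}$, the equal-$\delta$ setting used elsewhere in the paper being the special case $\delta_k=\delta$.
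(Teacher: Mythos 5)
Your proposal is correct and follows essentially the same route as the paper's own proof: chain rule over time slots, dropping future states via the Markov hypothesis, the two-case analysis (all-erased gives zero entropy, otherwise the state-conditioned input-to-output map is injective), independence of the fresh state from the input to pull out the factor $1-\prod_{i\in\Ic}\delta_i$, and finally conditioning-reduces-entropy to pass from $Y^{t-1}_{\Ic}$ to $Y^{t-1}_{\Jc}$. The only cosmetic difference is that you state the per-set identity cleanly for both $\Ic$ and $\Jc$ before comparing, whereas the paper folds the last inequality directly into the chain for $\Ic$.
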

\begin{proof}
Appendix \ref{appendix:erasure-ineq}.
\end{proof}
\begin{lemma} \label{lemma:decentralized}
Under decentralized cache placement \cite{maddah2013decentralized}, the
following equality holds for any $i$ and $\Kc\subseteq[K]$  
\[ H(W_i \cond \{Z_{k}\}_{k\in \Kc} ) = \left(1-p\right) ^{|\Kc|} H(W_i). \]
\end{lemma}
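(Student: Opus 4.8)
\emph{Proof approach.} The plan is to exploit the product structure of the decentralized placement. Under \eq{eq:Zk}, file $W_i$ splits into the $2^K$ disjoint sub-files $\{W_{i\cond\Lc}\}_{\Lc\subseteq[K]}$, where $W_{i\cond\Lc}$ collects exactly the packets of $W_i$ cached by the users in $\Lc$ and by no one else. Since each packet of $W_i$ is placed into user $k$'s cache independently with probability $p=\frac{M}{N}$, the law of large numbers (exactly the argument behind \eq{eq:LLN}) gives, as $F\to\infty$,
\begin{equation}
\frac{|W_{i\cond\Lc}|}{F}\;\longrightarrow\; p^{|\Lc|}(1-p)^{K-|\Lc|}.
\end{equation}
Moreover, because the packets of $W_i$ are i.i.d.\ uniform on $\FF_q$ and the random placement acts only on packet indices (not on packet values), conditioned on the placement realization the sub-files $\{W_{i\cond\Lc}\}_{\Lc}$ are mutually independent, each uniform on $\FF_q^{|W_{i\cond\Lc}|}$, and jointly independent of the sub-files of the other files.

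The second step is to identify the part of $W_i$ revealed by $\{Z_k\}_{k\in\Kc}$. By \eq{eq:Zk} user $k$'s cache contains $W_{i\cond\Lc}$ for every $\Lc\ni k$, so $\{Z_k\}_{k\in\Kc}$ determines precisely the sub-files $W_{i\cond\Lc}$ with $\Lc\cap\Kc\neq\emptyset$ and is independent of the remaining collection $\{W_{i\cond\Lc}:\Lc\subseteq[K]\setminus\Kc\}$. Hence, measuring entropy in packets,
\begin{equation}
H\big(W_i\mid\{Z_k\}_{k\in\Kc}\big)=\sum_{\Lc\subseteq[K]\setminus\Kc}H(W_{i\cond\Lc})=F\!\!\sum_{\Lc\subseteq[K]\setminus\Kc}\!\! p^{|\Lc|}(1-p)^{K-|\Lc|}.
\end{equation}
A one-line binomial computation then closes the argument: writing $m=|\Kc|$ and grouping the subsets of $[K]\setminus\Kc$ by cardinality $j$,
\begin{equation}
\sum_{\Lc\subseteq[K]\setminus\Kc} p^{|\Lc|}(1-p)^{K-|\Lc|}=(1-p)^{m}\sum_{j=0}^{K-m}\binom{K-m}{j}p^{j}(1-p)^{K-m-j}=(1-p)^{m},
\end{equation}
and since $H(W_i)=F$ packets we obtain $H(W_i\mid\{Z_k\}_{k\in\Kc})=(1-p)^{|\Kc|}H(W_i)$.

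The only point requiring care is that the sub-file sizes $|W_{i\cond\Lc}|$ are themselves random, so the displayed equality is understood in the $F\to\infty$ limit exactly as in \eq{eq:LLN}; for a fully rigorous finite-$F$ statement one conditions additionally on the (value-independent) placement realization to make the independence step exact and then invokes concentration of $|W_{i\cond\Lc}|$ around $Fp^{|\Lc|}(1-p)^{K-|\Lc|}$. I do not anticipate any genuine obstacle here—the content of the lemma is essentially the product/independence structure of decentralized placement combined with the binomial identity.
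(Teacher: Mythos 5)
Your proposal is correct and follows essentially the same route as the paper: decompose $W_i$ into the independent sub-files $\{W_{i\cond\Lc}\}$, observe that conditioning on $\{Z_k\}_{k\in\Kc}$ removes exactly those with $\Lc\cap\Kc\neq\emptyset$, evaluate the remaining entropy via the law-of-large-numbers sub-file sizes, and close with the binomial theorem. Your added remark on the randomness of the sub-file sizes and the $F\to\infty$ interpretation is a minor refinement of the same argument, not a different approach.
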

\begin{proof}
\begin{align}
  \MoveEqLeft{H(W_i \cond \{Z_{k}\}_{k\in \Kc} )} \nonumber \\ 
  & = H(W_i \cond  \{W_{l |\Jc} \}_{ \Jc: \Jc\cap \Kc\ne \emptyset,\, l=1,\dots, N})\\
  & = H(W_i \cond  \{ W_{i |\Jc} \}_{ \Jc: \Jc\cap \Kc\ne \emptyset} )\\
  & = H( \{ W_{i |\Jc} \}_{ \Jc: \Jc\cap \Kc = \emptyset} ) \\
  & = \sum_{ \Jc: \Jc\cap \Kc = \emptyset} H(  W_{i |\Jc} ) \label{eq:tmp892}\\
  & = \sum_{ \Jc: \Jc\cap \Kc = \emptyset} p^{|\Jc|}(1-p)^{K-|\Jc|}H( W_{i} ) \label{eq:tmp893} \\
  & = H( W_{i} ) \ \sum_{l=0}^{K-|\Kc|} {K-|\Kc| \choose l}  p^{l}(1-p)^{K-l} \\
  & = (1-p)^{|\Kc|} H(W_{i}) 
\end{align}
where the first equality follows from \eqref{eq:Zk}, 
the second equality follows due to the independence between message files,
the third equality follows by identifying the unknown parts of $W_i$ given
the cache memories of $\Kc$ and using the independence of all sub-files;
\eqref{eq:tmp892} is again from the independence of the sub-files;
\eqref{eq:tmp893} is from the law of large number similarly as in
\eqref{eq:LLN}; finally the last equality follows by applying the binomial
theorem. 
\end{proof}

We apply genie aided bounds to create a degraded erasure broadcast channel by
providing the messages, the channel outputs, as well as the receiver side
information (contents of cache memories) to enhanced receivers. We focus on
the case without permutation and the demand $(d_1,\dots, d_K)=(1,\dots, K$)
due to the symmetry. We have for user $k$, $k=1,\ldots,K$,
\begin{align}
n (1-p)^k R_k &= (1-p)^k H(W_k) \\
 &= H(W_k |Z^k S^n) \\
 &\leq  I(W_k;Y_{\Ic_k}^n \cond Z^k S^n) + n \epsilon'_{n,k} \label{eq:tmp722}\\
 &\leq  I(W_k;Y_{\Ic_k}^n, W^{k-1} \cond Z^k S^n) + n \epsilon'_{n,k} \\
 &=  I(W_k;Y_{\Ic_k}^n \cond  W^{k-1} Z^k S^n) + n \epsilon'_{n,k} 
\end{align}
where the second equality is by applying Lemma \ref{lemma:decentralized} and
noting that $S^n$ is independent of others, \eqref{eq:tmp722} is from the
Fano's inequality; the last equality is from $I(W_k; W^{k-1} \cond
Z^k S^n) = 0$. Putting all the rate constraints together, and letting
$\epsilon_{n,k} \defeq \epsilon'_{n,k}/(1-p)^k$,
\begin{align}
  n (1-p)(R_1 - \epsilon_{n,1}) &\leq   H(Y^n_1 \cond Z_1 S^n) - H(Y^n_1\cond
  W_1  Z_1 S^n) \nonumber \\
&\ \, \vdots \nonumber \\
n (1-p)^K (R_K - \epsilon_{n,K}) &\le H(Y^n_{\mathcal{I}_K} \cond W^{K-1} Z^{K}  S^n) \nonumber \\
&\qquad - H(Y^n_{\mathcal{I}_K}\cond W^{K} Z^{K} S^n)
\end{align}%
We now sum up the above inequalities with different weights, and applying
Lemma~\ref{lemma:erasure-ineq} for $K-1$ times, namely, for $k=1,\ldots,K-1$,
\begin{align}
  \frac{H(Y^n_{\mathcal{I}_{k+1}} \cond W^{k} Z^{k+1} S^n) }{1-\delta^{k+1}}  &\le  \frac{H(Y^n_{\mathcal{I}_{k+1}} \cond W^{k} Z^{k} S^n)}{1-\delta^{k+1}}  \\ 
  &\le \frac{H(Y^n_{\mathcal{I}_{k}} \cond W^{k} Z^{k} S^n)}{1-\delta^k} 
\end{align}%
where the first inequality follows because removing conditioning increases the entropy. Finally, we have 
\begin{align}
\MoveEqLeft{\sum_{k=1}^{K} \frac{(1-p)^k}{1-\delta^k} (R_k - \epsilon_n)}
\nonumber \\
&\le \frac{H(Y^n_1  \cond Z_1 S^n)}{n(1-\delta)} - \frac{H(Y^n_{\mathcal{I}_K} \cond W^{K} Z^K S^n)}{n(1-\delta^K)}
\\
&\leq \frac{H(Y^n_1) }{n(1-\delta)}\le1
\end{align}%
which establishes the converse proof.

\section{Achievability} \label{section:achievability}
Exploiting the polyhedron structure, the vertices of the rate region \eqref{eq:wsr} can be proven to be 
\begin{align}
R_k = \begin{cases}
R_{\rm sym}( |\Kc|), k \in \Kc \\
0, k\notin \Kc 
\end{cases}
\end{align}
for $\Kc \subseteq [K]$. The proof follows the same footsteps as \cite[Section V]{maddah2010degrees} and shall be omitted. This means that when only $|\Kc|$ users are active in the system, each active user achieves the same symmetrical rate as the reduced system of dimension $|\Kc|$. Then, it suffices to prove the achievability of the symmetrical rate for a given dimension $K$.  To this end, we first revisit the algorithm proposed by Gatzianas et al. \cite{gatzianas2013multiuser}, hereafter called GGT algorithm, and characterize the high-order transmission rates. Then, we extend GGT algorithm to the context of the cached-enabled broadcast erasure packet channel.
\subsection{GGT algorithm revisited}
The algorithm proposed by Gatzianas et al. \cite{gatzianas2013multiuser} consists of $K$ phases. 
In each phase $k$, the transmitter sends order-$k$ packets simultaneously useful to a subset of $k$ users. A phase $k$ 
is further partitioned into $K \choose k$ subphases in each of which the transmitter sends packets intended to a unique subset of $k$ users. 
We wish to provide an informal but intuitive description of GGT algorithm along the line of \cite{maddah2010degrees} by assuming the number of private packets $N_0$ per user is arbitrarily large so that the length of each phase becomes deterministic. First we introduce the basic notions together with key parameters. 
To simplify the description of the algorithm, we let $i, j$ denote the cardinality of $\Ic, \Jc$. \begin{itemize}
\item $t_j$ denotes the duration of a given subphase intended to $j$ users in slots. The duration of phase $j$ is given by
$T_j={K \choose j} t_j$.
\item A packet of order-$i$ becomes order-$j$ {\it for a given user} for $i<j\leq K$ if erased by this user and all users in $[K]\setminus \Jc$ but received by $\Jc \setminus \Ic$. The probability of this event is denoted by $\alpha_{i \rightarrow j} = \delta^{K-j+1} (1-\delta)^{j-i}$.
We let 
\begin{align}\label{eq:Nij}
N_{i\rightarrow j}=t_i \alpha_{i\rightarrow j}
\end{align}
 denote the number of such packets. 
\item An order-$j$ packet is consumed {\it for a given user} if this user or at least one user in $[K]\setminus \Jc$ receives it. The probability
of this event is denoted by $\beta_j=1-\delta^{K-j+1}$. 
\end{itemize}
Due to the symmetry across users, we can focus on any arbitrary user to
define the parameters $N_{i\rightarrow j}$ and $\beta_j$. Under this setting, the length of order-$j$ subphase is given recursively by
\begin{align}\label{eq:RecursiveLength}
t_j = \frac{1}{\beta_j} \sum_{i=1}^{j-1} {j-1 \choose i-1}N_{i\rightarrow j}.
\end{align}
Here is a brief summary of the algorithm:
\begin{enumerate}
\item Phase $1$ (order-1 transmission): send $N_0$ private uncoded packets to each user. This generates $N_{1\rightarrow j}$ order-$j$ packets to be sent during phase $j=2,\dots, K$. 
\item Phase $j$ (order-$j$ transmission) for $j=2,\dots, K$: in each subphase intended to a subset $\Jc$ of users, send random linear combinations\footnote{The exact generation method to generate the linear combination is explained in \cite{gatzianas2013multiuser} and shall not be repeated here.} of $\sum_{i=1}^{j-1} {j \choose i} N_{i\rightarrow j}$ packets for all $\Ic \subseteq \Jc$. This subphase generates $N_{j\rightarrow j'}$ order-$j'$ packets to be sent in phase $j'>j$. Proceed sequentially for all subsets $\Jc$ of cardinality $j$. 
\end{enumerate}

\begin{figure}
\vspace{-10pt}
\begin{center}
\includegraphics[width=0.45\textwidth,clip=]{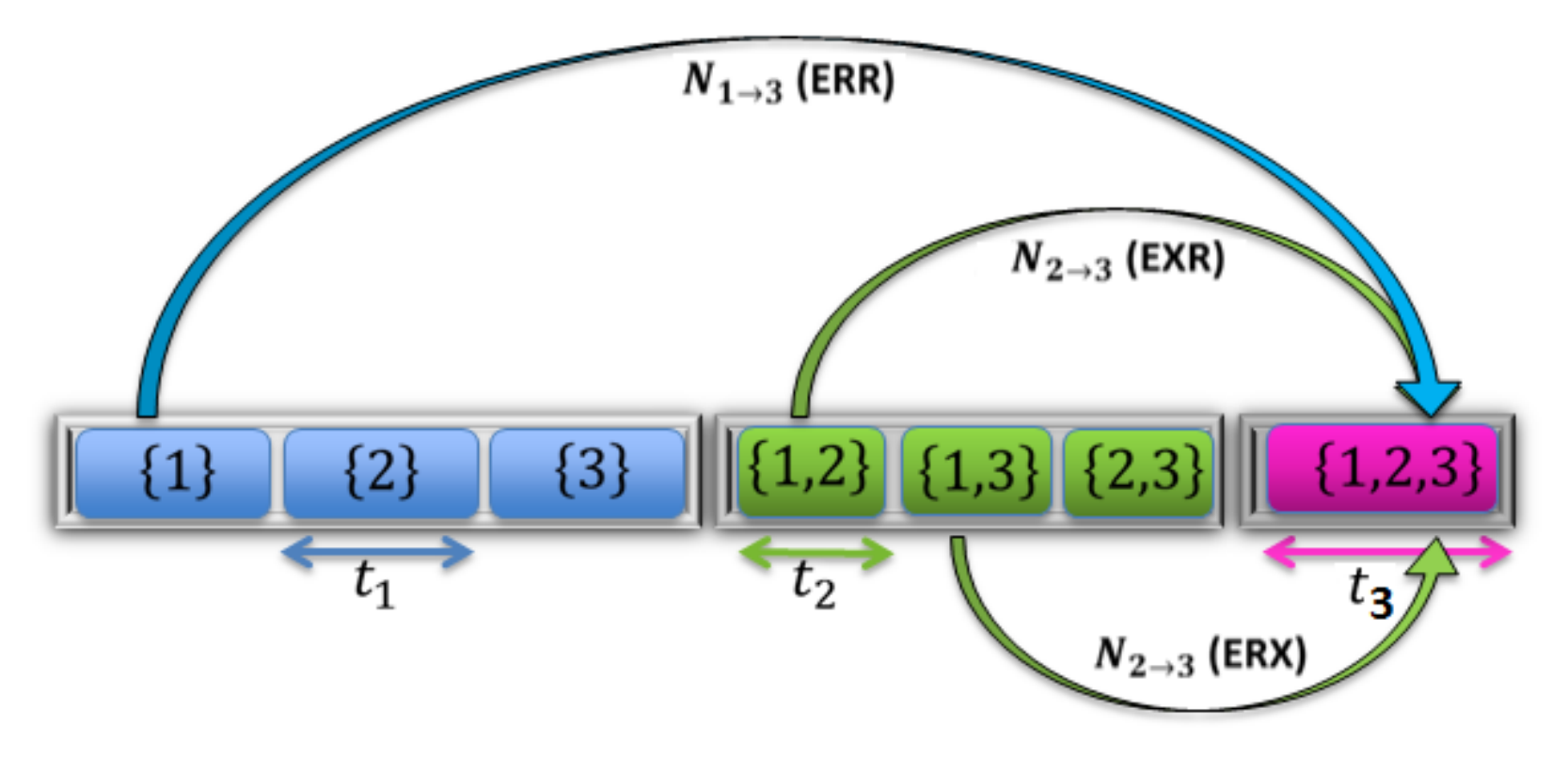}
\vspace{-2pt}
\caption{Phase organization for $K=3$ and packet evolution viewed by user 1.}
\label{fig:phase3}
\end{center}
\vspace{-10pt}
\end{figure}

Fig. \ref{fig:phase3} illustrates the phase (subphase) organization for $K=3$ and the packet evolution viewed by user 1. The order-$3$ packet is created both from phases 1 and 2. 
More precisely, the order-$1$ packets for user 1 becomes order-$3$ (via linear combination) if erased by user 1 and received by others ($ERR$). The number of such packets is $N_{1\rightarrow 3}$. 
Order-$2$ packets 
intended to $\{1,2\}$ becomes order-$3$ if erased by user 1 but received by user 3 ($EXR$) while packets intended to $\{1,3\}$ become order-$3$ if 
erased by user 1 and received by user 2 (event $EXR$). The total number of order-$3$ packets created from phase 2 is $2N_{2\rightarrow 3}$. 

\begin{lemma} \label{lemma:order}
In the $K$-user erasure broadcast channel with feedback, the sum rate of order-$i$ packets, denoted by $R^i(K)$ is upper bounded by
\begin{align}\label{eq:Rorder}
R^i (K) &\leq   \frac{{K \choose i}}{\sum_{k=1}^{K-i+1} \frac{ {K-k \choose i-1}}{1-\delta^k}}.
\end{align}
Algorithm GGT achieves the RHS of \eqref{eq:Rorder} with equality. 
\end{lemma}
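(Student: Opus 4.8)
The statement to be proved has two halves — an outer bound $R^i(K)\le \text{RHS of \eqref{eq:Rorder}}$, and the tightness of GGT — so I would prove them in turn, the second being the substantive one.

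\emph{Converse.} I would read the order-$i$ problem as a broadcast problem on the $K$-user BPEC with ${K \choose i}$ symmetric messages $\{W_{\Jc}\}_{|\Jc|=i}$, where $W_{\Jc}$ must be recovered by every user in $\Jc$ and may be handed for free (by a genie) to every user outside $\Jc$ — this only enlarges the region — and $R^i(K)=\sum_{\Jc}R_{\Jc}$. I would then run the genie-aided argument of Section~\ref{section:converse}: fix the ordering $1,\dots,K$ and bucket the target sets by $m(\Jc)\defeq\min\Jc$. There are exactly ${K-k \choose i-1}$ sets with $m(\Jc)=k$, and none once $k>K-i+1$, which is already the source of both the binomial coefficients and the summation range in \eqref{eq:Rorder}; moreover the hockey-stick identity $\sum_{k=1}^{K-i+1}{K-k \choose i-1}={K \choose i}$ confirms the normalization. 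Enhancing receiver $k$ to observe $Y^n_{\Ic_k}$ with $\Ic_k=\{1,\dots,k\}$ and giving it the messages/side-information needed to decode $\{W_{\Jc}\}_{m(\Jc)=k}$ while keeping the chain degraded, Fano's inequality and the chain rule bound $\sum_{m(\Jc)=k}H(W_{\Jc})$ by a difference of conditional entropies of $Y^n_{\Ic_k}$. Summing these $K-i+1$ bounds with weights $\tfrac{1}{1-\delta^{k}}$, applying Lemma~\ref{lemma:erasure-ineq} repeatedly to telescope the middle terms (the Markov condition it requires holds because the erasure states are exogenous and i.i.d.), and bounding the leftover by $\tfrac{H(Y^n_1)}{n(1-\delta)}\le1$, yields $\sum_{\Jc}\tfrac{R_{\Jc}}{1-\delta^{m(\Jc)}}\le1$ for every ordering, which at the symmetric point is exactly \eqref{eq:Rorder}. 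The delicate point here is choosing the genie side-information so that each super-receiver really can decode its bucket and the chain is genuinely degraded — but this is the same bookkeeping already carried out in Section~\ref{section:converse}.

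\emph{Achievability.} I would run GGT on phases $i,i+1,\dots,K$ only, injecting fresh order-$i$ packets into each of the ${K \choose i}$ order-$i$ subphases. With an order-$i$ subphase of length $t_i$, the same consumed-per-slot accounting that underlies \eqref{eq:RecursiveLength} shows it carries $\beta_i t_i$ distinct order-$i$ packets, hence ${K \choose i}\beta_i t_i$ in total, while \eqref{eq:Nij}--\eqref{eq:RecursiveLength} then determine $t_{i+1},\dots,t_K$ and the total airtime $T^{(i)}_{\rm tot}=\sum_{j=i}^{K}{K \choose j}t_j$. Thus $R^i(K)={K \choose i}\beta_i t_i/T^{(i)}_{\rm tot}$, and matching \eqref{eq:Rorder} reduces to the identity
\[
\frac{1}{\beta_i t_i}\sum_{j=i}^{K}{K \choose j}\,t_j=\sum_{k=1}^{K-i+1}\frac{{K-k \choose i-1}}{1-\delta^{k}} .
\]
I would prove this by first solving the recursion in closed form: the substitution $t_j=t_i c_j$ with $c_i=1$ turns \eqref{eq:RecursiveLength} into the triangular recursion $c_j=\tfrac{\delta^{K-j+1}}{1-\delta^{K-j+1}}\sum_{l=i}^{j-1}{j-1 \choose l-1}(1-\delta)^{j-l}c_l$, which is solvable by induction on $j$ using Vandermonde-type convolutions, after which the weighted sum $\sum_j{K \choose j}t_j$ collapses with the hockey-stick identity above together with $1-\delta^{k}=(1-\delta)\sum_{m=0}^{k-1}\delta^{m}$.

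\emph{Main obstacle and sanity checks.} The crux is precisely this last algebraic collapse: the recursion \eqref{eq:RecursiveLength} entangles nested binomial coefficients with the geometric weights $\delta^{K-j+1}/(1-\delta^{K-j+1})$, and bringing it into the clean two-factor form on the right of \eqref{eq:Rorder} is the only non-mechanical step. Before attempting the general identity I would verify the bookkeeping on $K=3$ (all $i$) and on the boundary cases $i=1$, $i=K$, where the claim reads $R^1(K)=1/\sum_{k=1}^{K}\tfrac{1}{1-\delta^{k}}$ per active user — consistent with the $M=0$ specialization of Theorem~\ref{theorem:region} — and $R^K(K)=1-\delta$. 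A less computational alternative, which avoids the closed form altogether, is a charging argument: attribute to each freshly injected order-$i$ packet its expected total airtime across phases $i,\dots,K$ (a packet residing in an order-$l$ subphase costs $1/\beta_l$ slots there and, with probability $\alpha_{l\to l'}$, migrates into an order-$l'$ subphase for each $l'>l$), sum these per-packet costs over all ${K \choose i}\beta_i t_i$ injected packets, and recognize the resulting series as the right-hand side of \eqref{eq:Rorder}.
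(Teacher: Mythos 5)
Your converse is the paper's converse: the same genie chain in which receiver $k$ is enhanced to observe $Y^n_{\Ic_k}$ and handed the messages of receivers $1,\dots,k-1$, the same bucketing of the order-$i$ messages by their minimal element (giving the ${K-k \choose i-1}$ count and the truncation at $k=K-i+1$), and the same weighted sum telescoped by Lemma~\ref{lemma:erasure-ineq} into $\sum_{\Jc} R_{\Jc}/(1-\delta^{\min\Jc})\le 1$, followed by symmetrization. Your achievability also starts identically --- run GGT from phase $i$ with $N_i$ fresh packets per order-$i$ subphase and reduce the claim to the identity $\sum_{j=i}^{K}{K \choose j}t^i_j = N_i\sum_{j=i}^{K}{j-1 \choose j-i}/\beta_j$ --- but you and the paper close this identity differently, and it is precisely the step you flag as the crux without carrying out. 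You propose to solve the recursion for each individual normalized length $c_j=t^i_j/t^i_i$ in closed form by induction; the paper never computes the individual $t^i_j$ at all, but regroups the total airtime as $\sum_{j}U^i_j$ with $U^i_j=\sum_{l=i}^{j}{j-1 \choose l-1}t^i_l$, shows (following Gatzianas et al.) that these grouped quantities obey the alternating recursion $U^i_j=\frac{1}{\beta_j}\sum_{l=1}^{j-i}{j-1 \choose l}(-1)^{l+1}\beta_{j-l}U^i_{j-l}$, and verifies the clean closed form $U^i_j=\frac{N_i}{\beta_j}{j-1 \choose j-i}$ via a binomial-coefficient identity and the binomial theorem. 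The grouped sums collapse; the individual $t^i_j$ do not admit a comparably simple expression, so your induction on $c_j$ would be substantially messier than you suggest (your alternative ``charging'' argument, which attributes expected airtime to each injected packet across the phases it migrates through, is closer in spirit to the paper's regrouping and is the better of your two options). Everything else --- the sanity checks at $i=1$ and $i=K$, and the reindexing $k=K-j+1$ turning ${j-1 \choose j-i}/\beta_j$ into ${K-k \choose i-1}/(1-\delta^k)$ --- is consistent with the paper.
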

\begin{proof}
Appendix \ref{appendix:orderRate}.
\end{proof}
As a corollary of Lemma \ref{lemma:order}, the symmetrical rate \eqref{eq:symmetricR} can be rewritten in a convenient form.
\begin{cor}\label{cor:Sheng}
The order-$1$ rate, $R^1(K) = K R_{\rm sym}$ of the $K$-user broadcast erasure channel with feedback can be expressed as a 
function of $R^2, \dots, R^{K}$ as follows. 
\begin{align}\label{eq:Sheng}
R^1(K)  = \frac{KN_0 }{ \frac{K N_0}{\beta_1} + \sum_{j=2}^K \frac{{K\choose j} N_{1\rightarrow j}}{R^j(K)}}
\end{align}
where $\frac{K N_0}{\beta_1} $ is the duration of phase 1, ${K \choose j} N_{1\rightarrow j}$ corresponds to the total number of order-$j$ packets generated in phase 1.
\end{cor}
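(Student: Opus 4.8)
The plan is to read $R^1(K)=KR_{\rm sym}$ as the order-$1$ throughput of the GGT algorithm and to decompose the algorithm's total running time. In the regime of arbitrarily large $N_0$, where every subphase has deterministic length, GGT injects $KN_0$ order-$1$ (private) packets in phase $1$ and terminates, rendering all of them decodable, after $T_{\rm tot}$ slots; hence $R^1(K)=KN_0/T_{\rm tot}$. It therefore suffices to show $T_{\rm tot}=\frac{KN_0}{\beta_1}+\sum_{j=2}^{K}{K\choose j}N_{1\rightarrow j}/R^j(K)$. I would first split $T_{\rm tot}=T_1+\sum_{j=2}^{K}T_j$. Phase $1$ is the concatenation of $K$ order-$1$ subphases, each of which keeps retransmitting its $N_0$ private packets until each one is \emph{consumed} (an event of probability $\beta_1$ per slot), so $T_1=Kt_1=KN_0/\beta_1$, which matches the first term.

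The heart of the argument is the identity $\sum_{j=2}^{K}T_j=\sum_{j=2}^{K}{K\choose j}N_{1\rightarrow j}/R^j(K)$. Phase $1$ generates exactly ${K\choose j}N_{1\rightarrow j}$ order-$j$ packets for each $j=2,\dots,K$, and every later transmission is a linear combination of packets whose ancestry, through the promotion rule \eqref{eq:Nij}, traces back to one of these seeds. Because the subphase-length recursion \eqref{eq:RecursiveLength}, the promotion counts \eqref{eq:Nij}, the per-slot consumption probability $\beta_m$, and the promotion probabilities $\alpha_{i\rightarrow m}$ are all linear in the packet counts and blind both to the serviced subset and to a packet's provenance, one can reorganize phases $2,\dots,K$ into a concatenation of $K-1$ independent executions $\mathcal{V}_2,\dots,\mathcal{V}_K$, where $\mathcal{V}_j$ is GGT run from scratch on the ${K\choose j}N_{1\rightarrow j}$ order-$j$ packets alone, which it then promotes only to orders $j+1,\dots,K$. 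Concatenating rather than interleaving these executions does not change the total slot count, since both perform the same multiset of transmissions; it merely re-brackets them. Finally, each $\mathcal{V}_j$ is precisely an instance of the order-$j$ throughput problem of Lemma~\ref{lemma:order}: it delivers ${K\choose j}N_{1\rightarrow j}$ order-$j$ packets, and since GGT attains the bound \eqref{eq:Rorder} with equality, it does so in exactly ${K\choose j}N_{1\rightarrow j}/R^j(K)$ slots. Summing over $j$ and adding $T_1$ gives the desired expression for $T_{\rm tot}$, and hence \eqref{eq:Sheng}; the interpretation of the two terms in the denominator is then immediate.

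The step I expect to be the main obstacle is the reorganization: making rigorous that the cumulative duration of phases $\ge 2$ is a linear functional of the vector $\big({K\choose j}N_{1\rightarrow j}\big)_{j=2}^{K}$ of order-$j$ seed counts, and that restricting this functional to one coordinate recovers the standalone order-$j$ execution. The safe route is to unfold \eqref{eq:RecursiveLength}--\eqref{eq:Nij} directly and prove, by induction on the order, that $\sum_{j\ge 2}T_j=\sum_{j\ge 2}c_j\,{K\choose j}N_{1\rightarrow j}$ for coefficients $c_j$ depending only on $K$ and $\delta$; then, feeding GGT a single nonzero seed level and invoking Lemma~\ref{lemma:order}, one identifies $c_j=1/R^j(K)$. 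The one point to watch carefully is that the consume-or-retransmit bookkeeping ($\beta_m$) and the promotion bookkeeping ($\alpha_{i\rightarrow m}$ and $N_{i\rightarrow m}$ in \eqref{eq:Nij}) are genuinely provenance-independent, which is exactly what allows packets of different origins to be served within a common subphase without altering its length.
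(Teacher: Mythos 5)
Your proposal is correct and follows essentially the same route as the paper: the paper also writes $R^1(K)=KN_1/\bigl(KN_1/\beta_1+\sum_{j\ge2}{K\choose j}t_j\bigr)$ and reduces everything to the decomposition identity $t_j(N_1)=\sum_{i=2}^{j}t^i_j(N_{1\rightarrow i})$, proved by induction on $j$ via the linearity of \eqref{eq:RecursiveLength} and \eqref{eq:Nij}, then identifies each standalone order-$i$ execution's duration with ${K\choose i}N_{1\rightarrow i}/R^i_{\rm GGT}(K)$ from \eqref{eq:Rggl}. The ``safe route'' you sketch in your final paragraph is precisely the paper's argument.
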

\begin{proof}
Appendix \ref{appendix:Sheng}.
\end{proof}

\subsection{Proposed delivery scheme}
Now we are ready to describe the delivery phase by extending GGT algorithm to the context of the cached-enabled network. 
For simplicity, we first provide an example of $K=3$ users and three files $A, B, C$ of size $F$ packets each. 
After decentralized placement phase, each file is partitioned into $8$ subfiles as seen in \eqref{eq:DecentralizedPlacement}. 
Obviously, the subfile $A_{\Jc}$ for $1\in \Jc$, i.e. $A_1, A_{12}, A_{13}, A_{123}$ are received by the destination and shall not be transmitted in delivery phase. The same holds for $B_{\Jc}$ for $2\in \Jc$ as well as $C_{\Jc}$ for $3\in \Jc$. 

In the presence of users' caches, the packets to be sent in phases $j$ are composed by order-$j$ packets created by the algorithm and placement phase. By treating placement phase as phase 0, we let $N_{0\rightarrow j}$ denote the number of order-$j$ packets generated in placement phase for a subset of $j$ users for $j\geq 2$. For $j=1$, we use a short-hand notation $N_{0\rightarrow 1} = N_0$. By focusing on user 1, we have 
\begin{align}\label{eq:N0j}
N_0 &= |A_0|=F (1-p)^3, \;  N_{0\rightarrow 3} = |A_{23}| =F p^2(1-p) \nonumber \\
 N_{0\rightarrow 2} &= |A_2| = |A_3|= F p (1-p)^2.  
\end{align}

\paragraph{Phase 1} The transmitter sends $A_0, B_0, C_0$ in TDMA.
Each packet is repeated until at least one user receives it. 
After a subphase intended to user 1,  the subfile $A_0$ is partitioned into: $ A_0 =\{A_{01}, A_{02}, A_{03}, A_{012}, A_{013}, A_{023}, A_{0123} \}$
where $A_{0\Jc}$ denotes with some abuse of notation the part of $A_0$ received by receivers in $\Jc$ for $\Jc\subseteq\{1,2,3\}$.
In other words, a subphase creates order-$j$ packets whose number is given by  
\begin{align}\label{eq:N1j}
N_{1\rightarrow j+1} = |A_{0\Jc}| = \frac{N_0}{1-\delta^3} (1-\delta)^{j} \delta^{3-j}
\end{align}
for any $\Jc$ of cardinality $j=1,2$.  
\paragraph{Phase 2} The transmitter sends packets in three subphases for users $\{1,2\}$, $\{1,3\}$, $\{2,3\}$, where each subphase is of size
\[t_2 = \frac{N_{1\rightarrow 2} + N_{0\rightarrow 2}}{1-\delta^2}\]
\begin{itemize}
\item subphase $\{1,2\}$: linear combinations $F$ between $A_2, A_{02}$ and $B_1, B_{01}$
\item subphase $\{1,3\}$: linear combinations $G$ between $A_3, A_{03}$ and $C_1, C_{01}$
\item subphase $\{2,3\}$: inear combinations $H$ between $B_3, B_{03}$ and $C_2, C_{02}$
\end{itemize}
Phase 2 creates order-$3$ packets given as linear combinations of $F_{13}, F_{23}$, $G_{12}, G_{23}$, $H_{12}, H_{13}$.
The size of any of these packets is given by $N_{2\rightarrow 3} = |F_{13}| = t_2 \delta (1-\delta)^2$. 

\paragraph{Phase 3} Send order-$3$ packets obtained by linear combinations of $A_{23}, B_{13}, C_{12}$ created in placement phase, 
$A_{023}, B_{013}, C_{012}$ created in phase 1, and $F_{13}, F_{23}$, $G_{12}, G_{23}$, $H_{12}, H_{13}$ created in phase 2. 
The length of phase 3 is given by
\begin{align}
T_3 = t_3 =  \frac{  N_{0\rightarrow 3} +N_{1\rightarrow 3} +2 N_{2\rightarrow 3} }{1-\delta}. 
\end{align}

For the three-user example, it is possible to compute the length of each phase recursively to find the symmetrical rate. In fact, it suffices to consider additionally the packets generated in phase 0 by adding $N_{0\rightarrow j}$ in \eqref{eq:RecursiveLength}.
However, this straightforward approach is no longer feasible for a large $K$. Therefore, we apply Corollary \ref{cor:Sheng} to find the symmetrical rate.
From Lemma \ref{lemma:order}, we have $R^2(3)=\frac{1-\delta^2}{1+2(1+\delta)}$ and $R^3(3) = 1-\delta$. Plugging \eqref{eq:N0j}, \eqref{eq:N1j}, we
readily obtain 
\begin{align}
R^1_{\rm cache} (3) 
&=3 \left(\frac{1-p}{1-\delta} +\frac{(1-p)^2}{1-\delta^2}
+\frac{(1-p)^3}{1-\delta^3} \right)^{-1}.
\end{align}
which coincides with $3R_{\rm sym}(3)$.

A generalization to the $K$-user case is rather trivial since the placement phase only yields the high-order packets to be sent together with those generated by the algorithm. 
The achievable symmetrical rate is obtained by modifying \eqref{eq:Sheng} by
including the packets generated from placement phase as 
\begin{align}
R^1_{\rm cache}(K)  = \frac{K N_0}{\frac{KN_0}{\beta_1} + \sum_{j=2}^K
\frac{{K\choose j} (N_{0\rightarrow j} + N_{1\rightarrow j})}{R^j(K)} }.
\end{align} 
By repeating the same steps as the proof of Corollary \ref{cor:Sheng}, it readily follows that the above 
expression coincides with $KR_{\rm sym}(K)$.

\section{Numerical Examples}\label{section:Examples}
\begin{figure*}
  \begin{minipage}{\columnwidth}
\begin{center}
\includegraphics[width=0.95\textwidth,clip=]{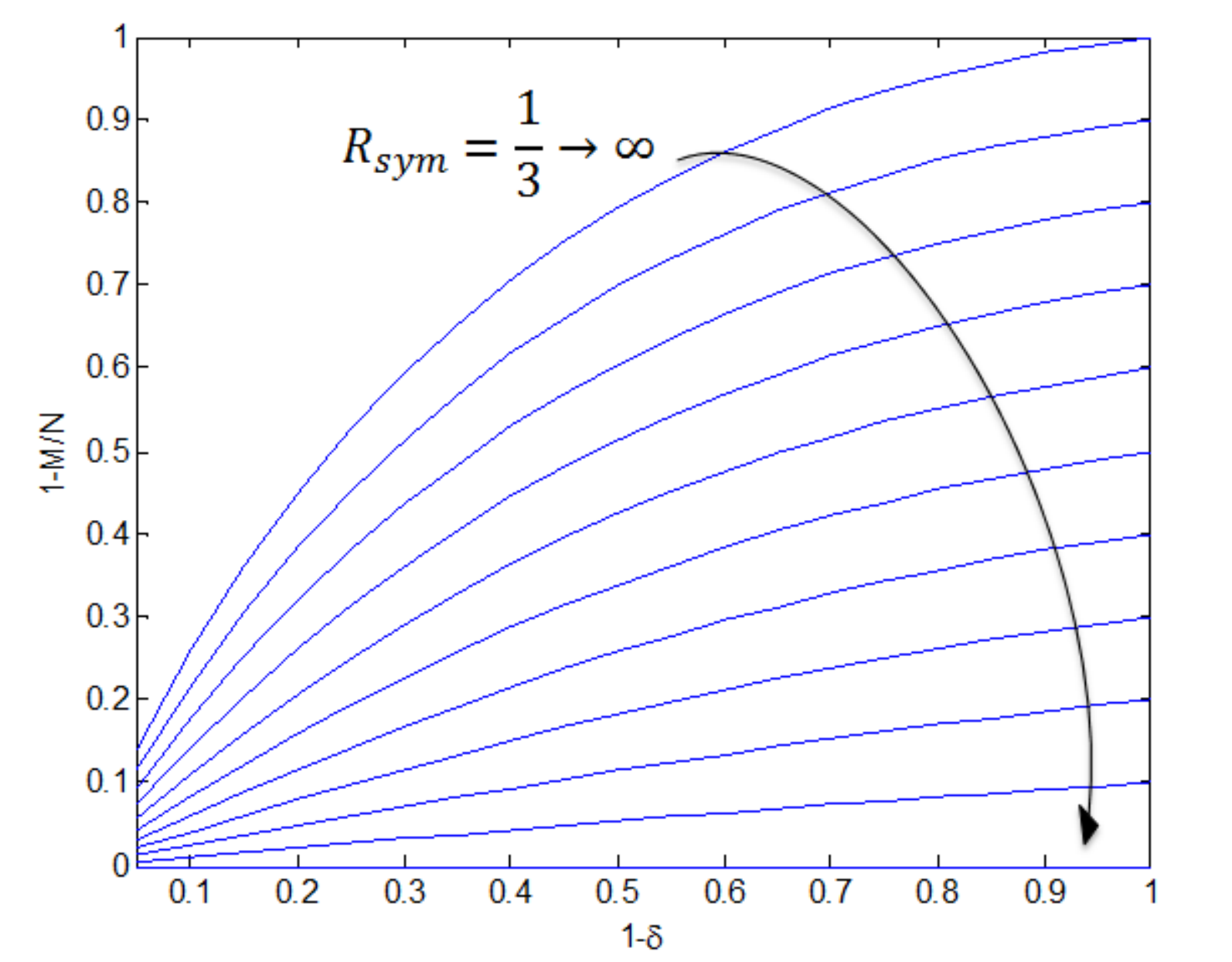}
\caption{The tradeoff between the memory and the erasure for $K=3$.}
\label{fig:memory-erasure}
\end{center}
\end{minipage}
  \begin{minipage}{\columnwidth}
\begin{center}
\includegraphics[width=0.9\textwidth,clip=]{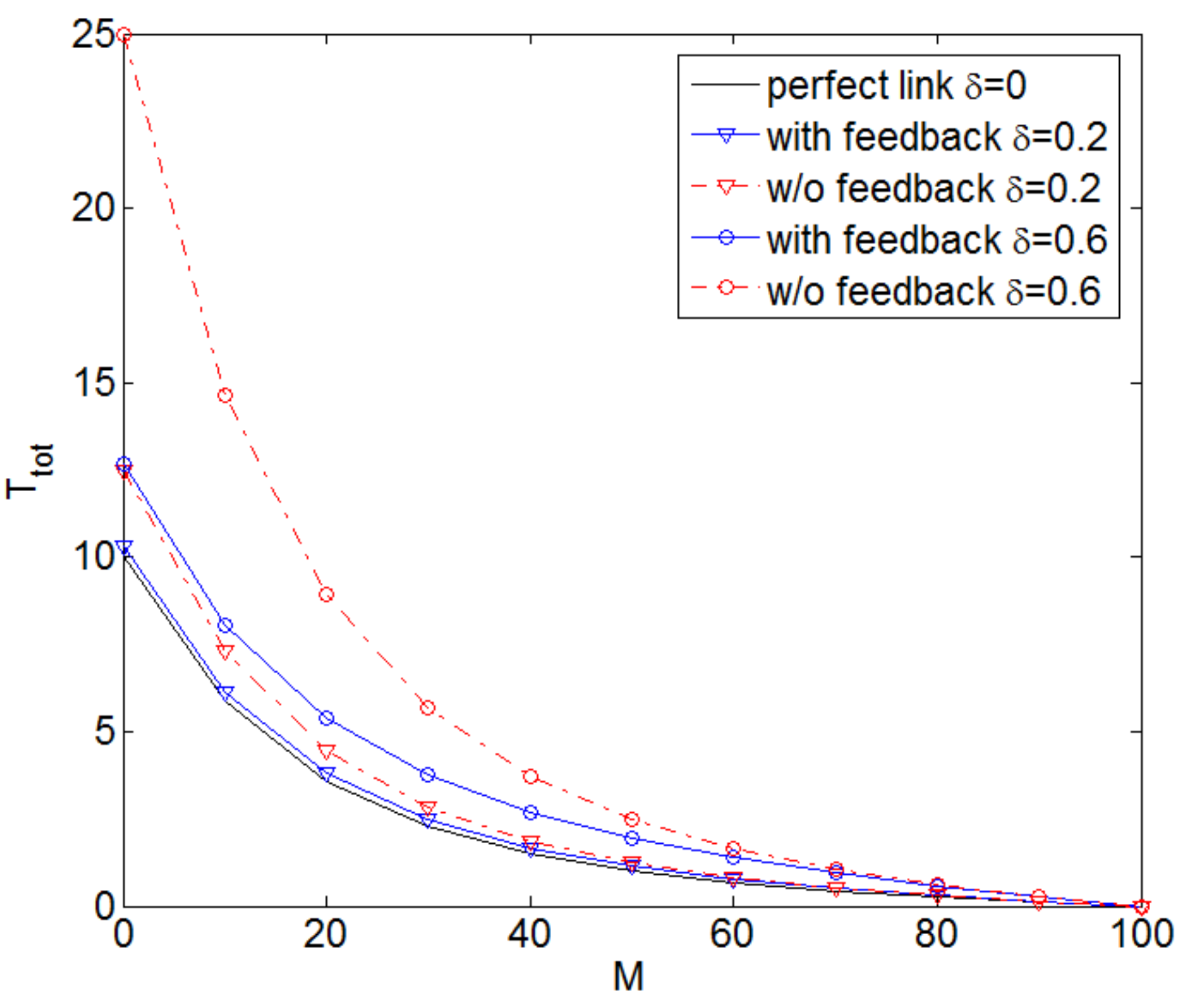}
\caption{The number of transmission $T_{\rm tot}$ as a function of memory size $M$ for $N=100, K=10$. }
\label{fig:TtotvsM}
\end{center}
\end{minipage}
\end{figure*}

In this section we provide some numerical examples to show the performance of our proposed delivery scheme. 
Fig.~\ref{fig:memory-erasure} illustrates the tradeoff between the erasure probability and the memory size for the case of $K=3$.
Each curve corresponds to a different symmetrical rate $R_{\rm sym}(3)$. The arrow shows the increasing $R_{\rm sym}$ from $1/3$, corresponding to the broadcast channel without memory $M=0$ and erasure $\delta=0$, to infinity. The cache memory increases the 
rate performance even in the presence of erasure and the benefit of memory
cache is significant for smaller erasure probabilities as expected 
from the analytical expression. Fig. \ref{fig:TtotvsM} compares the number of transmission $T_{\rm tot}$, normalized by the file size $F$, achieved by our delivery scheme with feedback 
and the scheme without feedback. We consider the system with $N=100, K=10$
and the erasure probabilities of $\delta=0$ (perfect link), $0.2$, and $0.6$.  We observe that state feedback can be useful especially when the memory size is small and the erasure probability is large. 
In fact, it readily follows that  the rate region of the cached-enabled broadcast channel with no feedback is given by
\begin{align}\label{eq:NoFB-region}
\sum_{k=1}^K \frac{\left(1-\frac{M}{N}\right)^k}{1-\delta} R_{\pi_k} \leq 1 
\end{align}
yielding
\begin{align}
T_{\rm tot-noFB} = F \frac{\sum_{k=1}^K \left(1-\frac{M}{N}\right)^k}{1-\delta}.
\end{align}
This corresponds to the total number of transmission over the perfect link expanded by a factor $\frac{1}{1-\delta}>1$ because any packet must be received by all users whatever the order of the packet is. 
Recalling that the feedback is useless for multicasting, the merit of feedback becomes significant if packets of lower order dominate the order-$K$ packets.
The case of small $p=\frac{M}{N}$ and large erasure probability corresponds
to such a situation.

\section{Conclusions}\label{section:conclusions}
In this work, we studied decentralized coded caching in the broadcast erasure packet channels with state feedback.
Our main contribution is the characterization of the achievable rate region of the channel at hand for the worst case demand such that
users' requests are all different in the regime of a large number of files $N\geq K$. 
Contrary to the pessimistic conclusion made by recent work \cite{huang2015performance,timo2015joint} , it is found that the performance of coded caching is no longer limited by the worst user in the presence of state feedback. In fact, state feedback is useful to improve
the rate performance especially when the erasure probability is large and/or the normalized memory size is small. 

While we restricted ourselves to some regime of interest and to decentralized cache placement for the sake of simplicity, our work can be easily extended to other regimes and the case of centralized coded caching. For example, let us consider the case where subsets of users request a common file and the transmitter must convey a mixture of different-order messages (typically in the regime of $N< K$), our proposed delivery scheme can be extended to such situation along the line of \cite{piantanida2013analog}. It should be noticed that the converse proof on the different-order message rate in Lemma \ref{lemma:order} is already general enough to cover all possible file requests. Interestingly, the proposed delivery algorithm can apply directly to centralized coded caching by starting phase-$(b+1)$ transmission where $b=\frac{KM}{N}$ is the ratio of the aggregate memory to the library size. Other interesting yet non-trivial generalization includes the case of non-uniform popularity distribution as well as the case of online coded caching.

\appendix
\section*{Elements of Proofs}
\subsection{Proof of Lemma~\ref{lemma:erasure-ineq}} \label{appendix:erasure-ineq}
  We have, for $\Jc\subseteq\Ic$,
  \begin{align}
    \MoveEqLeft[0]{H(Y^n_{\mathcal{I}} \cond U, S^n)}\\ 
    &= \sum_{i=1}^n H(Y_{\mathcal{I}, i} \cond
    Y^{i-1}_{\mathcal{I}}, U, S^n) \\
    &= \sum_{i=1}^n H(Y_{\mathcal{I},i} \cond Y^{i-1}_{\mathcal{I}}, U, S^{i-1}, S_i) \\
    &= \sum_{i=1}^n \mathrm{Pr}\{S_i\cap\mathcal{I}\ne\emptyset\} \, H(X_i \cond Y^{i-1}_{\mathcal{I}}, U, S^{i-1}, S_i\cap\mathcal{I}\ne\emptyset) \\
    &= \sum_{i=1}^n \bigl(1-\prod_{i\in\Ic}\delta_i\bigr) H(X_i \cond Y^{i-1}_{\mathcal{I}}, U, S^{i-1}) \\
    &\le \bigl(1-\prod_{i\in\Ic}\delta_i\bigr) \sum_{i=1}^n  H(X_i \cond Y^{i-1}_{\mathcal{J}}, U, S^{i-1})
    \label{eq:tmp821}
  \end{align}%
  where the first equality is from the chain rule; the second equality is due to the current input does not
  depend on future states conditional on the past outputs/states and $U$; the third one holds since $Y_{\mathcal{I},i}$ is deterministic and has entropy $0$
  when all outputs in $\mathcal{I}$ are erased~($S_i\cap\mathcal{I}=\emptyset$); the fourth equality is from
  the independence between $X_i$ and $S_i$; and we get the last inequality by removing the terms
  $Y^{i-1}_{\Ic\setminus\Jc}$ in the condition of
  the entropy. Following the same steps, we have 
  \begin{align}
    \MoveEqLeft[0]{H(Y^n_{\mathcal{J}} \cond U, S^n) = 
    \bigl(1-\prod_{i\in\Jc}\delta_i\bigr) \sum_{i=1}^n  H(X_i \cond Y^{i-1}_{\mathcal{J}}, U, S^{i-1})}
  \end{align}%
  from which and \eqref{eq:tmp821}, we obtain \eqref{eq:essential-erasure}.

\subsection{Proof of Lemma~\ref{lemma:order}}\label{appendix:orderRate}
We first provide the converse proof. Similarly to section \ref{section:converse}, we build on genie-aided bounds and the channel symmetry inequality. Lemma \ref{lemma:erasure-ineq}. 
Let us assume that the transmitter wishes to convey the message $W_{\Kc}$ to a subset of users $\Kc \subseteq \{1,\dots, K\}$ and receiver $j$ wishes to decode all messages $\tilde{W}_j\eqdef \{W_{\Kc}\}_{j\in \Kc}$ for $j=1, \dots, K$. The messages are all independent. 
We let $R_{\Kc}$ denote the rate of the message $W_{\Kc}$. 
In order to characterize the upper bound on the $|\Kc|$-th order message rate $R_{\Kc}$, we use genie-aided bounds by assuming that receiver $k$ provides $Y^{k}$ to receivers $k+1$ to $K$. Under this setting and using the Fano's inequality, we have for receiver 1 :
\begin{align}
n \left(\sum_{1 \in \Jc \subseteq [K] }R_{\Jc}-\epsilon \right)
&= H(Y_1^n|S^n) - H(Y_1^n |\tilde{W}_1S^n) 
\end{align}
For receiver $k=2, \dots, K$, we have: 
\begin{align}
& n \left( \sum_{k \in \Jc \subseteq \{k,\dots, K\}} R_{\Jc}-\epsilon \right) = H( \tilde{W}_k |\tilde{W}^{k-1} S^n) \\
&\leq I(\tilde{W}_k; Y_1^n \dots Y_k^n|\tilde{W}^{k-1} S^n)  \\
&= H(Y_1^n \dots Y_k^n|\tilde{W}^{k-1} S^n)- H(Y_1^n \dots Y_k^n|\tilde{W}^{k} S^n)
\end{align}
Summing up the above inequalities with appropriate weights and applying Lemma \ref{lemma:erasure-ineq} $K-1$ times, we readily obtain for this user ordering:
\begin{align}\label{eq:upperbound}
n \left( \sum_{k=1}^K \frac{\sum_{k \in \Jc \subseteq \{k,\dots, K\}} R_{\Jc}}{1-\delta^k}-\epsilon \right) &\leq \frac{H(Y_1^n|S^n)}{1-\delta} \\
& \leq 1.
\end{align}
We further impose the symmetrical rate condition such that $R_{\Kc} = R_{\Kc'}$ for any subset $\Kc, \Kc'$ with equal cardinality and define the $j$-th order message rate as $R^j(K) = R_{\Kc}$ for any $\Kc$ of cardinality $j$. By focusing on $\Jc$ of the same cardinality $j$ in \eqref{eq:upperbound}, 
the upper bound on $R^j(K)$ is given by  
\begin{align}
R^j (K) \leq \frac{1}{\sum_{k=1}^{K} \frac{ {K-k \choose j-1}}{1-\delta^k}}. 
\end{align}

In order to prove the achievability of the $i$-th order rate, we proceed GGT algorithm from phase $i$ by sending $N_i$ packets to each subset 
$\Ic \subseteq [K]$ with $|\Ic|=i$. The length of $j$-order subphase in \eqref{eq:RecursiveLength} 
is now given by 
\begin{align}\label{eq:t^i_j}
t^i_j (N_i)= \frac{1}{\beta_j} \sum_{l=i}^{j-1} {j-1 \choose l-1}N^i_{l\rightarrow j}, \;\; j> i
\end{align}
where we added the index $i$ and the dependency on $N_i$ to clarify the fact that the algorithm starts by sending $N_i$ packets in each subphase in phase $i$ with $N^i_{l\rightarrow j} = t_l^i \alpha_{l\rightarrow j}$. The dependency on $N_i$ might be omitted if it clear. 
For $j=i$, we have 
\begin{align}\label{eq:t^i_i}
t^i_i (N_i) = \frac{N_i}{\beta_i}
\end{align}
The sum rate of order-$i$ messages achieved by GGT algorithm is given by  
\begin{align}\label{eq:Rggl}
R^{i}_{\rm GGT}(K)=\frac{{K \choose i}N_i}{\sum_{j=i}^{K}{K \choose j}t^{i}_j (N_i)}~~~~\forall j.
\end{align} 
We notice that the number of transmission from phase $j$ to $K$ can be expressed by grouping subphases in the following different way:  
\begin{align}
\sum_{j=i}^{K}{K \choose j}t^{i}_j (N_i)= \sum_{j=i}^{K}U_{j}^{i}
\end{align} 
where 
\begin{align}
U_{j}^{i}=\sum_{l=i}^{j}{j-1 \choose l-1}t^{i}_l~~~~\forall j\geq i
\end{align}
By following similar steps as \cite[Appendix C]{gatzianas2013multiuser}, we obtain the recursive equation given by 
\begin{align}\label{eq:U^i_j}
U_{j}^{i}&=\frac{1}{\beta_j}\sum_{l=1}^{j-i}{j-1 \choose l}(-1)^{l+1}\beta_{j-l}U_{j-l}^{i} 
\end{align}
for $j >i$.
Since we have $U_i^i=t^i_i=\frac{N_i}{\beta_i}$ and using the equality ${j-1 \choose c}{j-c-1 \choose i-1}={j-1 \choose j-i}{j-i \choose c}$ and the binomial theorem $\sum_{k=0}^n {n \choose k} x^k y^{n-k}=(x+y)^n$, it readily follows that we have 
\begin{align}
U_j^i&=\frac{N_i}{\beta_j}{j-1 \choose j-i}, \;\;\; j\geq i.
\end{align}
By plugging the last expression in \eqref{eq:Rggl}, we have  
\begin{align}
R^{i}_{\rm GGT} (K)&=\frac{{K \choose i}N_i}{\sum_{j=i}^{K}\frac{N_i}{\beta_j}{j-1 \choose j-i}}\\
&=\frac{{K \choose i}}{\sum_{k=1}^{K-i+1}\frac{{K-k \choose i-1}}{\beta^{K-k+1}}}
\end{align} 
which coincides the upper bound of \eqref{eq:Rorder}. This establishes the achievability proof. 


\subsection{Proof of Corollary~\ref{cor:Sheng}} \label{appendix:Sheng}
We prove that the RHS of \eqref{eq:Sheng}, denoted here by $f$, coincides with $R^1(K)$ by exploiting the result of Lemma \ref{lemma:order}. 
By replacing $R^i(K)$ with $R^i_{\rm GGT}(K)$ defined in \eqref{eq:Rggl} for $N_i=N_{1\rightarrow i}$ for $i\geq 2$ and $N_1=N_{1\rightarrow 1}$, we have
\begin{align}
f & =\frac{KN_1}{\frac{KN_1}{\beta_1}+\sum_{i=2}^{K}\sum_{j=i}^{K}{K \choose j}t_j^{i} (N_{1\rightarrow i})}\\
&=\frac{KN_1}{\frac{KN_1}{\beta_1}+\sum_{j=2}^{K}{K \choose j}\sum_{i=2}^{j}t_j^{i} (N_{1\rightarrow i})}
\end{align}
To prove that $f= R^{1}_{\rm GGT}(K) = \frac{KN_{1\rightarrow j}}{\sum_{j=1}^K {K \choose j} t_j}$ where $t_j$ is defined in \eqref{eq:RecursiveLength}, it suffices to prove the following equality:
\begin{align}\label{eq:t_j}
t_j(N_1) &=\sum_{i=2}^j t^i_j (N_{1\rightarrow i}) ~~~~~\forall j \geq 2
\end{align}
For $j=2$, the above equality follows from \eqref{eq:RecursiveLength} and \eqref{eq:t^i_i}.
\begin{align}
t_2(N_1) =\frac{N_{1\rightarrow 2}}{\beta_2}=t_2^{2}(N_{1\rightarrow 2}).
\end{align}
Now suppose that \eqref{eq:t_j} holds true for $2 \leq l \leq  j-1$ and we prove it for $j$. From \eqref{eq:RecursiveLength} we have
\begin{align}
t_j(N_1) &=\frac{1}{\beta_j}\sum_{l=1}^{j-1}{j-1 \choose l-1}N_{l\rightarrow j} \\
&=\frac{1}{\beta_j}\left[\sum_{i=2}^{j-1}\sum_{l=i}^{j-1}{j-1 \choose l-1}N^i_{l\rightarrow j}+N_{1\rightarrow j} \right]\\
&=\sum_{i=2}^{j-1}t^{i}_j+ t_j^j 
\end{align}
where the second equality follows by recalling $N_{l\rightarrow j}=t_l \alpha_{l\rightarrow j}$ and plugging the recursive expression \eqref{eq:t_j} in $t_l$ for 
$l=2,\dots, j-1$, the last equality is due to \eqref{eq:t^i_i}. Therefore, 
we verify the desired equality also for $j$. 
This yields
\begin{align}
f &=\frac{KN_1}{Kt_1^{1}(N_1)+\sum_{j=2}^{K}{K \choose j}\sum_{i=2}^{j}t_j^{i}(N_{1\rightarrow j})}\\
&=\frac{KN_1}{Kt_1^{1}(N_1)+\sum_{j=2}^{K}{K \choose j}t_j^{1}(N_{1\rightarrow j})}\\
&=\frac{KN_1}{\sum_{j=1}^{K}{K \choose j}t_j^{1}(N_1)}\\
&=R^1_{\rm GGT}(K).
\end{align}


\end{document}